\newtheorem{theorem}{Theorem}
\newtheorem{proposition}{Proposition}
\newtheorem{remark}{Remark}
\newtheorem{definition}{Definition}
\begin{document}
\title{\huge{ Improper Signaling in Two-Path Relay Channels }}
\author{{ Mohamed Gaafar\IEEEauthorrefmark{2}, Osama Amin\IEEEauthorrefmark{3}, Rafael F. Schaefer\IEEEauthorrefmark{2}, and Mohamed-Slim Alouini\IEEEauthorrefmark{3}} \\[0.25cm]
\IEEEauthorrefmark{2} Information Theory and Applications Chair, Technische Universit\"at Berlin, Germany\\
\IEEEauthorblockA{\IEEEauthorrefmark{3} Computer, Electrical, and Mathematical Sciences and Engineering (CEMSE) Division\\
King Abdullah University of Science and Technology (KAUST), Thuwal, Makkah Province, Saudi Arabia.\\[0.25cm]
E-mail: {\{{mohamed.gaafar, rafael.schaefer\}@tu-berlin.de}}, {\{{osama.amin, slim.alouini\}@kaust.edu.sa}}\\[0.25cm]
(\emph{Extended Version})
}}
\maketitle
%\pagenumbering{gobble}
\begin{abstract}
Inter-relay interference (IRI) challenges the operation of two-path relaying systems. Furthermore, the unavailability of the channel state information (CSI) at the source and the limited detection capabilities at the relays prevent neither eliminating the interference nor adopting joint detection at the relays nodes. Improper signaling is a powerful signaling scheme that has the capability to reduce the interference  impact at the receiver side and improves the achievable rate performance. Therefore, improper signaling is adopted  at both relays, which have access to the global CSI. Then,  improper signal characteristics  are designed  to maximize the total end-to-end achievable rate at the relays. To this end, both the power and the circularity coefficient, a measure of the impropriety degree of the signal,  are optimized at the relays. Although the optimization problem is not convex, optimal power allocation for both relays for a fixed circularity coefficient is obtained.  Moreover, the circularity  coefficient is tuned to maximize the rate for a given power allocation. Finally, a joint solution of the optimization problem is proposed using a coordinate descent method based on alternate optimization. The simulation results show that employing improper signaling improves the achievable rate at medium and high IRI.
\end{abstract}

%\begin{IEEEkeywords}
%Improper Gaussian signaling, asymmetric signaling, alternate relaying, two-path relaying, decode-and-forward, inter-rely-interference.
%\end{IEEEkeywords}

%%%%%%%%%%%%%%%%%%%%%%%%%%%%%%%%%%%%%%%%%%%%%%%%%%%%%%%%%%%%
%%%%%%%%%%%%%%%%%%%%%%%%%%%          Introduction        %%%%%%%%%%%%%%%%%%%%%%
%%%%%%%%%%%%%%%%%%%%%%%%%%%%%%%%%%%%%%%%%%%%%%%%%%%%%%%%%%%%
\section{Introduction}
Next generation wireless communication adopts technologies that extends the network coverage and improve the data rate. One of the candidate technologies is full-duplex relaying that targets to double the spectral efficiency. On the other hand, cooperative communication is an interesting technology to improve the data rate and extend the communication range. Full-duplex relaying is employed to extend the network coverage while improving the link quality. Despite of the promising performance that full-duplex can achieve, replacing all half-duplex nodes by full-duplex ones is not possible to be done immediately. During the roll-out phase, half-duplex nodes are used to support full-duplex services. Two path relaying, which is also known as, \emph{alternate relaying}, is a distributed realization of full-duplex relaying. Full-duplex relaying suffers from self-interference, whereas the two-path relaying suffers from inter-relay interference (IRI). Therefore, different interference mitigation techniques need to be adopted to relief the effect of the interference \cite{ju2009catching}.

Improper signaling is used to mitigate the interference impact on communication systems. It is an asymmetric Gaussian signaling scheme that assumes unequal power of the real and imaginary components and/or dependent real and imaginary  components. It is used in underlay cognitive radio \cite{Lameiro_SS_WCL15,Gaafar2015Spectrum,lameiro2016maximally,amin2016underlay,gaafar2017underlay}, overlay cognitive radio  \cite{amin2017overlay}, full-duplex relaying \cite{gaafar2016improper}, Z-interference channel \cite{lagen2014improper,kurniawan2015improper} and asymmetric hardware distortions \cite{sidrah2017asymmetric}. Recently, we considered the two-path relaying network and  showed that improper signaling can be advantageous over proper signaling to mitigate the IRI  \cite{gaafar2016letter,gaafar2017WSA}.  Specifically, in \cite{gaafar2016letter}, improper signaling is adopted in two-path relaying system, where only the same circularity coefficient, a measure of the degree of impropriety of the signal, for both relays is optimized to mitigate the interference while the relays use  their maximum power. On the other hand, in \cite{gaafar2017WSA}, we considered the same problem but with different circularity coefficients at the relays. Moreover, we considered asymmetric time allocation for the two transmission phases while the relays use their maximum power. 

In this paper, we take the problem in  \cite{gaafar2016letter}  further and optimize both the relay power and  circularity coefficient, which measures the degree of impropriety of the transmit signal, to maximize the  end-to-end achievable rate of the two-path relaying system. First, we consider proper signaling and  introduce optimal relays power allocation for the system. In the case of using improper signaling, we allocate the relays power with a fixed circularity coefficient. Moreover, we tune the circularity coefficient while fixing the transmit power. Then, we jointly optimize the relays power and circularity coefficient via a coordinate descend based method by iterating between the optimal solutions of the individual problems till a convergence obtained. Finally, we investigate through numerical results the merits that can be reaped if the relays use improper signals using different strategies.

%%%%%%%%%%%%%%%%%%%%%%%%%%%%%%%%%%%%%%%%%%%%%%%%%%%%%%%%%%%%
%%%%%%%%%%%%%%%%%%%%%%%%%%%        System Model        %%%%%%%%%%%%%%%%%%%%%%
%%%%%%%%%%%%%%%%%%%%%%%%%%%%%%%%%%%%%%%%%%%%%%%%%%%%%%%%%%%%
%
\section{System Model}\label{sec:sys_mod}
We consider here an alternate two-path relaying network consisting of one source node, $S$, two half-duplex relay nodes, $R_1$ and $R_2$, and one destination node, $D$, as shown in Fig. \ref{fig1}. We adopt decode-and-forward protocol at both relays. Moreover,  the relays transmit and receive in turn, i.e., in one time slot one relay receives and the other relay transmits, and in the next time slot the other way around.  Let $h_i$ and $g_i$, $i \in \{1, 2\}$, denote the channel between $S$ and $R_i$ and the channel between $R_i$ and $D$, respectively. We assume channel reciprocity for the inter-relay channel which is denoted by $f$. Moreover, let us assume that the source transmit power is $p_{\rm{s}}$, the relay transmit power is $p_{\rm{r}}$, and the noise variance at each receiving node is $\sigma_{\rm{n}}^2$. The transmit powers are limited to a power budget of $p_{\rm{max}}$.  First, we give the following definitions of improper random variables (RV).
\begin{definition}\cite{Neeser1993proper} \label{def1}
The complementary (pseudo-) variance of a zero mean complex random variable $x$ is defined as $\tilde{\sigma}_x^2=E\{x^2\}$,   where $\mathbb{E}\{.\}$ denotes the expectation operator. If $\tilde{\sigma}_x^2=0$, then $x$ is called proper signal, otherwise it is called improper.
\end{definition}
\begin{definition} \cite{Lameiro_SS_WCL15}\label{def2}
The circularity coefficient of the signal $x$ is a measure of its impropriety degree and is defined as $\mathcal{C}_x ={|\tilde{\sigma}_x^2|}/{\sigma_x^2}$, where ${\sigma}_x^2=E\{|x|^2\}$ is the conventional variance and $|.|$ is the absolute value operation. The circularity coefficient satisfies $0\leq \mathcal{C}_x \leq 1$.  In particular, $\mathcal{C}_x = 0$ and $\mathcal{C}_x = 1$ correspond to proper and maximally improper signals, respectively .
\end{definition}
We assume that no channel state information is available at $S$ which necessitates the use of proper signaling at $S$ and also makes dirty paper coding of no benefit to fully cancel the IRI. Also, we assume that no direct link is available between $S$ and $D$. For simplicity and tractability, we consider a yet illustrative scenario by assuming equal power and same circularity coefficient for the relays which may  not be optimal. However, as it will be shown in the simulation results, though these sub-optimal assumptions, improper signals show a significantly better performance than proper signaling. Furthermore, we expect even better performance if we increase the degrees of freedom by letting different power and circularity coefficient at the relays. Also, we assume the receivers use the simple practical decoding techniques by treating the interference as a Gaussian noise.
%Moreover, we assume that there is no direct link between $S$ and $D$. Also, we assume that no CSI is available at $S$ and hence, we use proper signals at its transmitter. For simplicity and tractability, we consider a basic yet illustrative scenario by assuming the use of improper signals at $R_i$ with the same circularity coefficient $\mathcal{C}_x$ for each relay. Furthermore, we assume that both $S$ and $R_i$ transmit with a fixed equal power $p_{\rm{r}}$.

%%
%Recall that PGS is capacity-achieving in the case of interference-free systems. On the other hand, IGS is known to improve the achievable rate in interference-limited systems. Hence, in the following, we assume that the source transmits proper signals as it does not cause interference in the network. However, since the signal from one relay causes interference at the other relay in the AR scheme, we propose that the relays send improper signals so that the network can achieve a higher rate.

%achieve a h    the  and  that To reduce IRI, we propose to use improper signaling at the relays. The idea is that the relay receives proper signals but transmit improper signals since it was shown that improper signals cause less interference.
%
%Although proper signaling is a special case of improper signaling, for the sake of clarity, in the following we provide the system model for proper and improper signaling separately.
%
\begin{figure*}[!t]
\centering
\includegraphics[width=5.5in]{./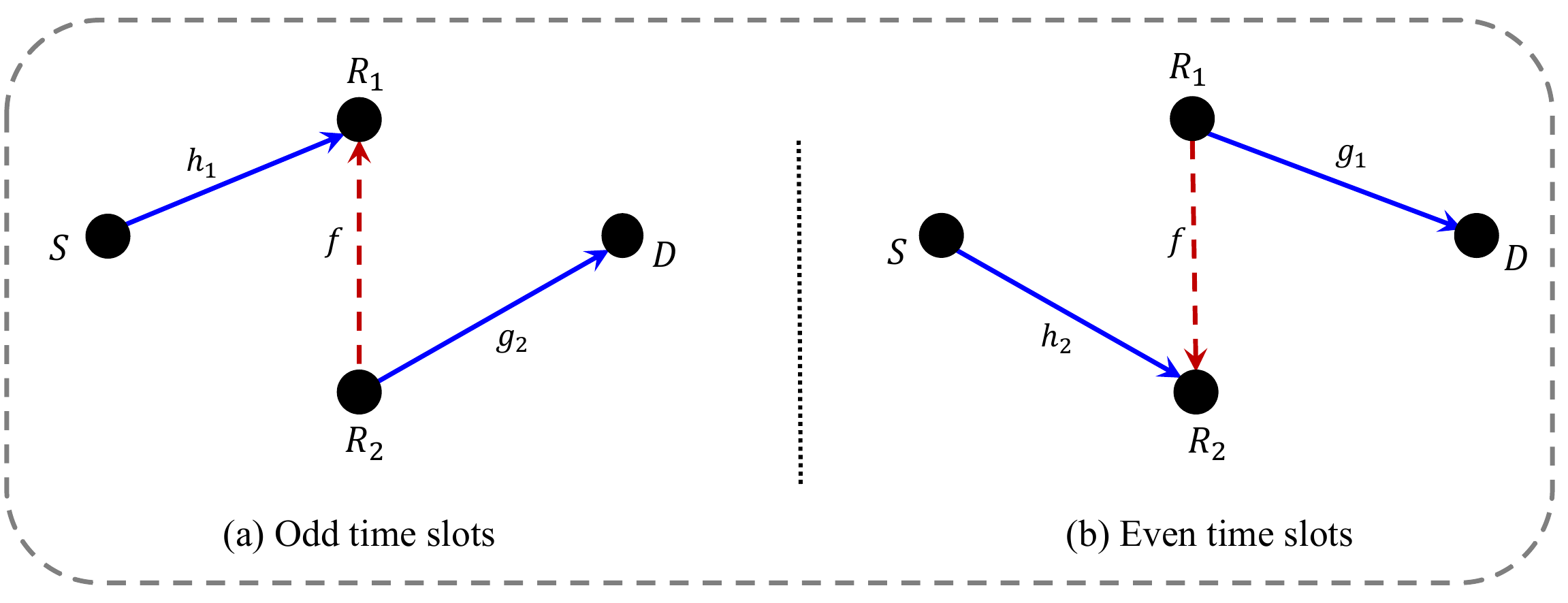}
\caption{Two-Path Relay Channel. The blue solid lines represent the signal links and the red dashed lines represent the IRI.}
\label{fig1}
\end{figure*}

%
%
%%===============================================
%\subsection{Proper Signaling}
%%===============================================
%During time slot $k$, the signal received at $R_i$ with $i=2-\mod(k,2)$ is given by
%\begin{equation}\label{eq1}
%    y_i[k] = \sqrt{p_s}h_i[k]x[k]+\sqrt{p_r}f[k]x[k-1]+n_i[k],
%\end{equation}
%where $x[k]$ is the transmit proper signal in time slot $k$ with $E[|x[k]|^2]=1$ and $n_i[k]$ is a zero-mean additive white Gaussian noise at $R_i$ with variance $\sigma_n^2$. The received signal at the destination from $R_i$ in time slot $k+1$ is given by
%\begin{equation}\label{eq2}
%    r[k+1] = \sqrt{p_r}g_i[k+1]x[k]+n[k+1],
%\end{equation}
%where $n[k+1]$ is a zero-mean additive white Gaussian noise at the destination with variance $\sigma_n^2$.
%%===============================================
%\subsection{Improper Signaling}
%%===============================================
During time slot $k$, the signal received at $R_i$ with $i=2-\mod(k,2)$ is given by\footnote{For the rest of the paper, we let $i,j \in \{1,2\}$, $i \neq j$}
\begin{equation}\label{eq1}
    y_i[k] = \sqrt{p_s} h_i[k] s[k]+\sqrt{p_r}f[k] x_{j}[k]+n_i[k],
\end{equation}
where $s[k]$ is the transmit proper signal by $S$  in time slot $k$ and $n_i[k]$ is the additive noise at $R_i$ with variance $\sigma_{\rm{n}}^2$. $x_j[k]$ is the improper signal, with circularity coefficient $\mathcal{C}_x$, transmitted by $R_j$ with $j=1+\mod(k,2)$. The received signal at $D$ from $R_i$ in time slot $k+1$ is given by
\begin{equation}\label{eq2}
    y_D[k+1] = \sqrt{p_r}g_i[k+1] x_{i}[k+1]+n[k+1],
\end{equation}
where $n[k+1]$ is the additive noise at $D$.
In the following, we assume the channels to be quasi-static block flat fading channels and therefore we drop the time index $k$ for notational convenience. The additive noise at the receivers is modeled as a white, zero-mean, circularly symmetric, complex Gaussian with variance $\sigma_{{n}}^2$.

The alternating two-path relaying system mimics a full-duplex system by transferring the data through two Z-interference channels,  where two transmitters ($S$ and $R_i$) are sending messages each intended for one of the two receivers ($R_j$  and $D$) as shown in Fig. \ref{fig1}. Hence, as a result of using improper signals at $R_j$ and proper signals at $S$  while treating the interference as a Gaussian noise, the achievable rate of the first hop of the $i$th path ($S-R_i$) can be expressed after some simplification steps as  \cite{zeng2013transmit}
\begin{align}\label{rate_first_hop}
{\mathcal{R}_{i,1}}\left( {{p_{\rm{r}}},{{\cal C}_x}} \right) = {\log _2}\left( {1 + \frac{{{p_{\rm{s}}}{{\left| {{h_i}} \right|}^2}}}{{{p_{\rm{r}}}{{\left| f \right|}^2} + \sigma _n^2}}} \right)\hspace{-3pt} + \hspace{-3pt} \frac{1}{2}{\log _2}\left( {\frac{{1 - {\cal C}_{{{{y}}_i}}^2}}{{1 - {\cal C}_{{\cal{I}}_i}^2}}} \right),
\end{align}
where ${\cal C}_{{{{y}}_i}}$ and ${\cal C}_{{\cal I}_i}$ are the circularity coefficients of the received and interference-plus-noise signals at $R_i$, respectively, which can be calculated as
\begin{align}
{{\cal C}_{{{{y}}_i}}} = \frac{{{p_{\rm{r}}}{{\left| f \right|}^2}{{\cal C}_x}}}{{{p_{\rm{s}}}{{\left| {{h_i}} \right|}^2} + {p_{\rm{r}}}{{\left| f \right|}^2} + \sigma _n^2}},\;\;{{\cal C}_{{\cal I}_i}} = \frac{{{p_{\rm{r}}}{{\left| f \right|}^2}{{\cal C}_x}}}{{{p_{\rm{r}}}{{\left| f \right|}^2} + \sigma _n^2}}.
\end{align}
Hence, \eqref{rate_first_hop} can be simplified to
\begin{align}\label{rate_first_hop_sim}
{\mathcal{R}_{i,1}}\left( {{p_{\rm{r}}},{{\cal C}_x}} \right) &= \frac{1}{2} \times \nonumber \\
 & {\log _2}\left( 1 + \frac{{2{p_{\rm{s}}}{{\left| {{h_i}} \right|}^2}\left( {{p_{\rm{r}}}{{\left| f \right|}^2} + \sigma _n^2} \right) + p_{\rm{s}}^2{{\left| {{h_i}} \right|}^4}}}{{\left( {1 - {\cal C}_x^2} \right)p_{\rm{r}}^2{{\left| f \right|}^4} + 2{p_{\rm{r}}}{{\left| f \right|}^2}\sigma _n^2 + \sigma _n^4}} \right).
\end{align} 
Similarly, the achievable rate of the second hop of the $i$th path can be obtained from \eqref{eq2} as
\begin{equation}\label{rate_second_hop}
{\mathcal{R}_{i,2}}\left( {{{p_{\rm{r}},\cal C}_x}} \right) = {\log _2}\left( {1 + \frac{{{p_{\rm{r}}}{{\left| {{g_i}} \right|}^2}}}{{\sigma _n^2}}} \right) + \frac{1}{2}{\log _2}\left( {\frac{{1 - {\cal C}_{{y_{{D}}}}^2}}{{1 - {\cal C}_{{{\cal I}_{{D}}}}^2}}} \right),
\end{equation}
where ${\cal C}_{{{{y}}_D}}$ and ${\cal C}_{{\cal I}_D}$ are the circularity coefficients of the received and interference-plus-noise signals at $D$, respectively, which can be computed as
\begin{equation}
{{\cal C}_{{y_{{D}}}}} = \frac{{{p_{\rm{r}}}{{\left| {{g_i}} \right|}^2}{{\cal C}_x}}}{{{p_{\rm{r}}}{{\left| {{g_i}} \right|}^2} + \sigma _n^2}},\quad {{\cal C}_{{{\cal I}_{{D}}}}} = 0.
\end{equation}
Then, \eqref{rate_second_hop} reduces to
\begin{equation}\label{rate_second_hop_sim}
{\mathcal{R}_{i,2}}\left( {{p_{\rm{r}}},{{\cal C}_x}} \right) = \frac{1}{2}{\log _2}\left( {1 + \frac{{2{p_{\rm{r}}}{{\left| {{g_i}} \right|}^2}}}{{\sigma _n^2}} + \frac{{p_{\rm{r}}^2{{\left| {{g_i}} \right|}^4}\left( {1 - {\cal C}_x^2} \right)}}{{\sigma _n^4}}} \right).
\end{equation}
Hence, the  end-to-end achievable rate of the $i$th path can be calculated from
\begin{equation}
{\mathcal{R}_{{_i}}}\left( {{p_{\rm{r}}},{{\cal C}_x}} \right) = \min \Big\{ {{\mathcal{R}_{i,1}}\left( {{p_{\rm{r}}},{{\cal C}_x}} \right),{\mathcal{R}_{i,2}}\left( {{p_{\rm{r}}},{{\cal C}_x}} \right)} \Big\}.
\end{equation} 
Accordingly, the overall end-to-end achievable rate of the two-path relaying system, for sufficiently large number of time slots\footnote{One slot is missed at the start of the transmission without delivering information from $S$ to $D$.}, is expressed as the arithmetic mean of ${{\mathcal{R}_{{_i}}}\left( {{p_{\rm{r}}},{{\cal C}_x}} \right)}$ 
\begin{equation}\label{R_tot}
{\mathcal{R}_{{\rm{T}}}}\left( {{p_{\rm{r}}},{{\cal C}_x}} \right) = \frac{1}{2}\sum\limits_{i = 1}^2 {{\mathcal{R}_{{_i}}}\left( {{p_{\rm{r}}},{{\cal C}_x}} \right)}.
\end{equation} 

\begin{remark}
One can notice that if $\mathcal{C}_x=0$ in \eqref{R_tot}, we obtain the conventional expression for the total achievable rate of the two-path relaying system under the use of proper signals as
\begin{align}\label{R_tot_p}
&{\mathcal{R}_{{\rm{T}}}}\left( {{p_{\rm{r}}},0} \right) = \frac{1}{2}\sum\limits_{i = 1}^2 {{\mathcal{R}_{{_i}}}\left( {{p_{\rm{r}}},0} \right)}= \frac{1}{2} \times\nonumber \\
&\sum\limits_{i = 1}^2 \min \Bigg\{ {{\log }_2}\left( {1 + \frac{{{p_{\rm{s}}}{{\left| {{h_i}} \right|}^2}}}{{\sigma _n^2 + {p_{\rm{r}}}{{\left| f \right|}^2}}}} \right),  {{\log }_2}\left( {1 + \frac{{{p_{\rm{r}}}{{\left| {{g_i}} \right|}^2}}}{{\sigma _n^2}}} \right) \Bigg\}. 
\end{align}  
\end{remark}

%%%%%%%%%%%%%%%%%%%%%%%%%%%%%%%%%%%%%%%%%%%%%%%
%%%%%%%%%%%%%%%         Performance Analysis       %%%%%%%%%%%%%%%%
%%%%%%%%%%%%%%%%%%%%%%%%%%%%%%%%%%%%%%%%%%%%%%%
\section{Improper Gaussian Signaling Design for Two-Path Relaying Systems}\label{sec:analysis}
In this section, we aim at optimizing the relays signal parameters represented in the relay's transmit power $p_{\rm{r}}$ and the circularity coefficient $\mathcal{C}_x$  in order to maximize the instantaneous end-to-end achievable rate of the system. First, the intuition behind the benefit of using improper signals at the relays is that it provides an additional degree of freedom that can be optimized in order to alleviate the effect of the IRI on the relays or, in the worst case, kept at the same performance as proper signaling, i.e., $\mathcal{C}_x=0$. Moreover, improper signaling has the ability to control the interference signal dimension, and it is one form of interference alignment \cite{kurniawan2015improper,cadambe2008interference}. Furthermore, when using proper signals, $R_i$ can improve the rate of the second hop of the $i$th by boosting its transmit power. However, this will deteriorate the rate of the first hop of the $j$th path and here improper signaling attains its benefit. By increasing the asymmetry of the relay's transmit signal, by boosting the circularity coefficient, the relay can increase its power and has a less adverse effect on the other one.

Now, in order to reap the benefits of improper signaling, we design the power and circularity coefficient of the relays.  For this purpose, we formulate the following optimization problem
\begin{align}
{\bf{P1}}:&\mathop {\max }\limits_{{p_{\rm{r}}},\mathcal{C}_x}\qquad  {\mathcal{R}_{{\rm{T}}}}\left( {{p_{\rm{r}}},\mathcal{C}_x} \right) \nonumber \\
&\;{\rm{s}}{\rm{.t}}{\rm{.}}\quad \quad \;\;0 < {p_{\rm{r}}} \le {p_{\max }},\nonumber\\
& \qquad \qquad \; 0\leq \mathcal{C}_x \leq 1.
\end{align}
Solving ${\bf{P1}}$ optimally is difficult as it is a non-convex optimization problem. Here, we propose a coordinate-descent (CD) based method in which we consider two problems, optimizing the relays transmit power for a fixed circularity coefficient and optimizing the circularity coefficient for a fixed transmit power. Finally, we perform alternate optimization of the optimal solutions of the two problems till we get convergence.
\begin{remark} \cite{Bertsekas1999nonlinear}\label{convergence_CD}
The CD method is popular for its efficiency, simplicity and scalability. Moreover, it is guaranteed to converge to a local solution if the global optimal solution is attained for each of the sub-problems. However, it does not necessarily converge to the global optimal solution as the objective function is non-convex. 
\end{remark} 
Following Remark \ref{convergence_CD}, we will show the optimal solutions of the two sub-problems. First, for notational convenience, we give the following definitions.
\begin{definition}\label{permutation_def}
Let $\pi$ denote the permutation of $\{1,2\}$ that sets the points ${z_i} \in \mathbb{R}_{++}$ in an increasing order such that $z_{\pi_1} \leq z_{\pi_2}$. Also, let $\mathcal{F}_{i,j}\left( {{p_{\rm{r}}},{{\cal C}_x}} \right) ={\mathcal{R}_{i,1}}\left( {{p_{\rm{r}}},{{\cal C}_x}} \right) +{\mathcal{R}_{j,2}}\left( {{p_{\rm{r}}},{{\cal C}_x}} \right)$ and \\[0.25cm]
$k_i\left( {{p_{\rm{r}}},{{\cal C}_x}} \right)=\mathop {\arg \min }\limits_{a \in \{1,2\}} {\mathcal{R}_{i,a}}\left( {{p_{\rm{r}}},{{\cal C}_x}} \right)$. 
\end{definition}
\vspace{2pt}

\noindent \textit{Sub-problem 1) }\textit{Relays Transmit Power Optimization Problem}

In this part, we optimize the relays transmit power for a fixed circularity coefficient $\mathcal{C}_x^o$. The corresponding optimization problem is given by
\begin{align}
{\bf{P2}}\left(\mathcal{C}_x^o\right):&\mathop {\max }\limits_{{p_{\rm{r}}}}\qquad  {\mathcal{R}_{{\rm{T}}}}\left( {{p_{\rm{r}}},\mathcal{C}_x^o} \right)\nonumber \\
&\;{\rm{s}}{\rm{.t}}{\rm{.}}\quad \quad \;\; 0 < {p_{\rm{r}}} \le {p_{\max }}.
\end{align}               
It can be verified that ${\bf{P2}}$ is a non-convex optimization problem which makes it hard, in general, to find its optimal solution. Also, due to the coupling between the achievable rates of the two paths in terms of $p_{\rm{r}}$, maximizing the rates of each individual path with respect to $p_{\rm{r}}$ and taking the arithmetic mean is not optimal. However, thanks to some special monotonicity properties of the objective function, we show that the optimal solution of ${\bf{P2}}$ lies  either at the intersection between ${\mathcal{R}_{i,1}}\left( {{p_{\rm{r}}},\mathcal{C}_x^o}\right)$  and ${\mathcal{R}_{i,2}}\left( {{p_{\rm{r}}},\mathcal{C}_x^o}\right)$, if exists or one of the stationary points of the $\mathcal{F}_{i,j}\left( {{p_{\rm{r}}},{{\cal C}_x^o}} \right)$ with respect to $p_{\rm{r}}$, if exists or the power budget  $p_{\rm{max}}$. Next, we will compute the intersection and stationary points.
\begin{proposition}\label{prop1}
There exists at most one intersection point, ${p_i}$, between ${\mathcal{R}_{i,1}}\left( {{p_{\rm{r}}},\mathcal{C}_x^o}\right)$  and ${\mathcal{R}_{i,2}}\left( {{p_{\rm{r}}},\mathcal{C}_x^o}\right)$ over the feasible interval $0 < p_{\rm{r}} \leq p_{\rm{max}}$. Moreover, this intersection point can be obtained by solving the quartic equation\footnote{The quartic equation can be solved by Ferrari's method \cite{gerolamo1993ars}. However, since the roots derived from this quartic equation are extremely complex and lengthy, we omit them due to the space limitations.}in \eqref{eq_quartic}.
\begin{figure*}
\begin{align}\label{eq_quartic}
\frac{{{{\left| {{g_i}} \right|}^4}{{\left| f \right|}^4}}{\left( {1 - {\cal C}_x^{o^2}} \right)^2}}{{\sigma _n^4}}p_i^4 + \frac{{{{2\left| {{g_i}} \right|}^2}{{\left| f \right|}^2}\left( {{{\left| {{g_i}} \right|}^2} + {{\left| f \right|}^2}} \right)}\left( {1 - {\cal C}_x^{o^2}} \right)}{{\sigma _n^2}}&p_i^3 + {\left| {{g_i}} \right|^2}\left( {4{{\left| f \right|}^2} + {{\left| {{g_i}} \right|}^2}\left( {1 - {\cal C}_x^{o^2}} \right)}  \right)p_i^2 \nonumber \\
& \hspace{-5pt}+ 2\left( {\sigma _n^2{{\left| {{g_i}} \right|}^2} - {p_{\rm{s}}}{{\left| {{h_i}} \right|}^2}{{\left| f \right|}^2}} \right){p_i} -  \left( {p_{\rm{s}}^2{{\left| {{h_i}} \right|}^4} + 2{p_{\rm{s}}}{{\left| {{h_i}} \right|}^2}\sigma _n^2} \right) = 0.
\end{align}
\end{figure*}  
 
\end{proposition}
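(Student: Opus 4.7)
The plan is to establish the two claims of the proposition separately: first that any intersection point must satisfy \eqref{eq_quartic}, and second that at most one such root lies in the feasible interval $(0,p_{\max}]$. For the first claim I would simply equate the two hop rates and clear denominators; for the second I would exploit the opposite monotonicity of the two hops in $p_{\rm{r}}$.

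For the derivation of the quartic, I would start from the simplified expressions in \eqref{rate_first_hop_sim} and \eqref{rate_second_hop_sim}. Both are of the form $\tfrac{1}{2}\log_2(\cdot)$, so equating them is equivalent to equating the arguments of the logarithms. After subtracting the unit terms and cross-multiplying through by the denominator of the first-hop SINR and by $\sigma_n^4$, I would collect like powers of $p_{\rm{r}}$. The leading coefficient $(1-{\cal C}_x^{o2})^2|f|^4|g_i|^4/\sigma_n^4$ comes from the product of the two quadratic-in-$p_{\rm{r}}$ terms on either side, while the constant term $-(p_{\rm{s}}^2|h_i|^4+2p_{\rm{s}}|h_i|^2\sigma_n^2)$ originates from the $p_{\rm{r}}$-independent part of the first-hop numerator; the intermediate coefficients follow by matching the eight remaining cross-terms, yielding \eqref{eq_quartic}.

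For uniqueness, I would argue that $\mathcal{R}_{i,1}(\cdot,{\cal C}_x^o)$ is strictly decreasing in $p_{\rm{r}}$ on $(0,p_{\max}]$ whereas $\mathcal{R}_{i,2}(\cdot,{\cal C}_x^o)$ is strictly increasing there. The decrease of the first-hop rate is intuitive but can be verified by writing its log-argument as a ratio $A(p_{\rm{r}})/B(p_{\rm{r}})$ and checking that $A'B-AB'<0$, which reduces to a manifestly positive polynomial in $p_{\rm{r}}$; equivalently, $p_{\rm{r}}$ only enters through the interference term, and the improper correction $\tfrac{1}{2}\log_2((1-\mathcal{C}_{y_i}^2)/(1-\mathcal{C}_{{\cal I}_i}^2))$ is both non-positive and non-increasing in $p_{\rm{r}}$. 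The increase of the second-hop rate is immediate from \eqref{rate_second_hop_sim}, since the linear and quadratic coefficients of $p_{\rm{r}}$ inside the logarithm are nonnegative, with the linear one strictly positive and independent of ${\cal C}_x^o$. Hence $\mathcal{R}_{i,1}-\mathcal{R}_{i,2}$ is strictly decreasing on the feasible interval and can vanish at most once.

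The main obstacle I anticipate is the algebraic bookkeeping in Step~1: all eight cross-terms must be tracked with correct signs and $(1-{\cal C}_x^{o2})$ factors in order to recognize \eqref{eq_quartic} exactly in the stated form. The monotonicity argument in Step~2 is conceptually clean; the one subtlety is the degenerate case ${\cal C}_x^o=1$, in which the leading quartic coefficient vanishes and the polynomial collapses to a cubic, but the same monotonicity argument still guarantees a unique feasible root, so the conclusion is preserved.
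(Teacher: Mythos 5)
Your derivation of the quartic matches the paper's (both simply equate the arguments of the two logarithms in \eqref{rate_first_hop_sim} and \eqref{rate_second_hop_sim} and clear denominators), but your uniqueness argument is genuinely different. The paper inspects the sign pattern of the quartic's coefficients --- the three leading coefficients are nonnegative with the $p_i^2$ one strictly positive, the constant term is strictly negative, so regardless of the sign of the linear term there is exactly one sign change --- and invokes Descartes' rule of signs to conclude there is exactly one root in $\mathbb{R}_{++}$. You instead prove that $\mathcal{R}_{i,1}(\cdot,\mathcal{C}_x^o)$ is strictly decreasing and $\mathcal{R}_{i,2}(\cdot,\mathcal{C}_x^o)$ strictly increasing on $(0,p_{\max}]$, so their difference vanishes at most once. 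Both arguments are valid; yours is arguably more informative (it also tells you which hop dominates on each side of $p_i$, which is exactly the structure exploited later in \eqref{R_tot_IGS_pr}), while the paper's is purely algebraic and yields the slightly stronger statement that the positive root always exists in $\mathbb{R}_{++}$, even if it may exceed $p_{\max}$.

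Two slips in your write-up deserve correction, though neither breaks the proof. First, your ``equivalently'' aside is wrong: since $\mathcal{C}_{y_i}<\mathcal{C}_{\mathcal{I}_i}$, the impropriety correction $\tfrac{1}{2}\log_2\bigl((1-\mathcal{C}_{y_i}^2)/(1-\mathcal{C}_{\mathcal{I}_i}^2)\bigr)$ is \emph{non-negative}, and it is not monotone in $p_{\rm{r}}$ (it is zero at $p_{\rm{r}}=0$ and tends to zero as $p_{\rm{r}}\to\infty$), so the decomposition argument does not by itself establish that $\mathcal{R}_{i,1}$ decreases. You must rely on your primary computation $A'B-AB'<0$ applied to the simplified form \eqref{rate_first_hop_sim}, which does check out: the difference equals $-2ap_{\rm{s}}|h_i|^2|f|^2p_{\rm{r}}^2-2ac_0p_{\rm{r}}-2p_{\rm{s}}|h_i|^2|f|^2\sigma_n^4-2p_{\rm{s}}^2|h_i|^4|f|^2\sigma_n^2$ with $a=(1-\mathcal{C}_x^{o2})|f|^4\ge 0$ and $c_0>0$, which is strictly negative. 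Second, at $\mathcal{C}_x^o=1$ the polynomial collapses to a quadratic, not a cubic, since both the quartic and cubic coefficients carry the factor $(1-\mathcal{C}_x^{o2})$; your monotonicity argument still covers this case.
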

\begin{proof}
By equating ${\mathcal{R}_{i,1}}\left( {{p_{\rm{r}}},\mathcal{C}_x^o}\right)$  and ${\mathcal{R}_{i,2}}\left( {{p_{\rm{r}}},\mathcal{C}_x^o}\right)$, we obtain \eqref{eq_quartic}. Then, by arranging the coefficients of the quartic equation in a descending order, the signs of theses coefficients, according to the sign of the linear term is either $\{+,+,+,+,-\}$ or $\{+,+,+,-,-\}$. In both cases,   there is only one change of signs. For our real quartic polynomial, this determines the number of positive roots to be exactly one root over $\mathbb{R}_{++}$ by using Descartes rule of signs \cite{prasolov2009polynomials}. Hence, there exists at most one intersection point over the feasible interval.
\end{proof}
\begin{remark}
For the case of using proper signals at the relays i.e., $\mathcal{C}_x^o=0$, the quartic equation reduces to the following quadratic equation
\begin{equation}
p_{_i}^2 + \frac{{\sigma _n^2}}{{{{\left| f \right|}^2}}}{p_i} - \frac{{{p_{\rm{s}}}{{\left| {{h_i}} \right|}^2}\sigma _n^2}}{{{{\left| {{g_i}} \right|}^2}{{\left| f \right|}^2}}} = 0,
\end{equation}
which can be solved to obtain the intersection point as
\begin{equation}
{p_i} = \frac{1}{{2\left| f \right|}}\left( {\sqrt {\frac{{\sigma _n^4}}{{{{\left| f \right|}^2}}} + \frac{{4{p_{\rm{s}}}{{\left| {{h_i}} \right|}^2\sigma _n^2}}}{{{{\left| {{g_i}} \right|}^2}}}}  - \frac{{\sigma _n^2}}{{\left| f \right|}}} \right).
\end{equation}
\end{remark}
Now, We can divide $\mathbb{R}_{++}$ into three intervals where $\{0, p_{\pi_1}, p_{\pi_2}, \infty\}$ are the boundaries for these intervals. From \eqref{R_tot}, ${\mathcal{R}_{{\rm{T}}}}\left( {{p_{\rm{r}}},\mathcal{C}_x^o} \right)$ can be reformulated in each interval as 
\begin{align}\label{R_tot_IGS_pr}
&{\mathcal{R}_{{\rm{T}}}}\left( {{p_{\rm{r}}},\mathcal{C}_x^o} \right) = \frac{1}{2} \times\nonumber \\
& \left\{ {\begin{array}{*{20}{c}}
{\sum\limits_{i = 1}^2 {\mathcal{R}_{i,2}}\left( {{p_{\rm{r}}},\mathcal{C}_x^o}\right), }&{\rm{if}}&{0 < {p_{\rm{r}}} \le p_{\pi_1}}\\
{\mathcal{R}_{\pi_1,1}}\left( {{p_{\rm{r}}},\mathcal{C}_x^o}\right)+{\mathcal{R}_{\pi_2,2}}\left( {{p_{\rm{r}}},\mathcal{C}_x^o}\right),&{\rm{if}}&{p_{\pi_1}  < {p_{\rm{r}}} \le p_{\pi_2} }\\
{\sum\limits_{i = 1}^2 {\mathcal{R}_{i,1}}\left( {{p_{\rm{r}}},\mathcal{C}_x^o}\right), }&{\rm{if}}&{p_{\pi_2}  < {p_{\rm{r}}} < \infty }
\end{array}} \right..
\end{align}

There are at maximum five stationary points $p^{(n)}_{{\rm{st}}_i} \in \mathbb{C}$, $n \in \{1,2,3,4,5\}$ of  $\mathcal{F}_{i,j}\left( {{p_{\rm{r}}},{{\cal C}_x}} \right)$ which can be calculated by finding the roots of  the derivative of $\mathcal{F}_{i,j}\left( {{p_{\rm{r}}},{{\cal C}_x}} \right)$ with respect to $p_{\rm{r}}$ over the  interval $0 < p_{\rm{r}} \leq p_{\rm{max}}$. The resulting equation is a quintic equation\footnote{The feasible roots of the quintic equation can be obtained numerically. }, which is very lengthy and we omit it due to space limitation.

%Before introducing the optimal solution of ${\bf{P2}}$, let us define the set of powers ${\hat {\cal P}}=\{p_{\pi_1},{p_{{\rm{st},k}}},p_{\pi_2},p_{\rm{max}}\}$, where ${p_{{\rm{st},k}}}$, $k \in \{1,2,3,4,5\}$ are the \textit{possible} stationary points, if exist,  of ${R_{{\rm{T}}}}\left( {{p_{\rm{r}}},\mathcal{C}_x^o} \right)$ in the middle interval, i.e., ${p_{\pi_1}  < {p_{\rm{r}}} \le p_{\pi_2} }$. We compute these stationary points by   equating the derivative, with respect to $p_{\rm{r}}$, of the function in the middle interval with zero. The resulting equation is a quintic equation\footnote{The feasible roots of the quintic equation, in the corresponding interval, can be obtained numerically. }, which is very lengthy and we omit it due to space limitation.
\begin{remark}
When using proper signals at the relays, the quintic equation reduces to the quadratic equation
\begin{equation}
\frac{{{{\left| {{g_j}} \right|}^2}}}{{\sigma _n^2}}{\left( {\sigma _n^2 + {p_{{{\mathrm{st}}}_i}}{{\left| f \right|}^2}} \right)^2} = {p_{\rm{s}}}{\left| {{h_i}} \right|^2}\left( {{{\left| f \right|}^2} - {{\left| {{g_j}} \right|}^2}} \right),
\end{equation}
 which can be solved to get only one possible stationary point as
\begin{equation}
{p_{{{\mathrm{st}}}_i}} = \sqrt {\frac{{\sigma _n^2{p_{\rm{s}}}{{\left| {{h_i}} \right|}^2}}}{{{{\left| {{g_j}} \right|}^2}{{\left| f \right|}^4}}}\left( {{{\left| f \right|}^2} - {{\left| {{g_j}} \right|}^2}} \right)}  - \frac{{\sigma _n^2}}{{{{\left| f \right|}^2}}},
\end{equation}
in which it can be easily shown that ${p_{{{\mathrm{st}}}_i}} \in \mathbb{R}_{++}$ if and only if
\begin{equation}
{\left| f \right|^2} - {\left| {{g_j}} \right|^2} > \frac{{\sigma _n^2{{\left| {{g_j}} \right|}^2}}}{{{p_{\rm{s}}}{{\left| {{h_i}} \right|}^2}}}.
\end{equation}
\end{remark}
Before introducing the optimal solution of ${\bf{P2}}$, let us give the following definition
\begin{definition}\label{def_feasible}
Let the set of feasible transmit powers $\mathcal{P}_{\rm{int}}=\left\{ p_i \mid {0  < p_i\le p_{\rm{max}} } \right\}$. Also, the set of feasible stationary points $\mathcal{P}_{\rm{st}}=\left\{ p^{(n)}_{{\rm{st}}_i} ,\mid {p_{\pi_1}  < p^{(n)}_{{\rm{st}}_i}\le p_{\pi_2} }\; \&\; p^{(n)}_{{\rm{st}}_i} \leq p_{\rm{max}}\right\}$. 
\end{definition}

From Definition \ref{def_feasible}, $\mathcal{P}_{\rm{int}}$ and $\mathcal{P}_{\rm{st}}$ can be empty sets. Based on the aforementioned analysis, the optimal solution of  ${\bf{P2}}$ can be found from the following theorem.
\begin{theorem}\label{theorem_IGS_pr}
In a two-path relaying system, where the two relays transmit improper signals and by treating interference as a Gaussian noise, the optimal power allocation, at a fixed circularity coefficient, that maximizes the total achievable rate constrained by a power budget $p_{\rm{max}}$ can be obtained as
\begin{equation}
\begin{array}{l}
p_{\rm{r}}^* = \mathop {\arg \max }\limits_{p \in \mathcal{P}_{{\rm{T}}}} {\mathcal{R}_{{\rm{T}}}}\left( {p,\mathcal{C}_x^o} \right),
\end{array}
\end{equation}
where $\mathcal{P}_{{\rm{T}}}=\mathcal{P}_{\rm{int}} \cup \mathcal{P}_{\rm{st}} \cup p_{\rm{max}}$.
\end{theorem}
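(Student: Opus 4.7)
The plan is to exploit the three-piece structure of $\mathcal{R}_{\rm T}(p_{\rm r},\mathcal{C}_x^o)$ in \eqref{R_tot_IGS_pr} together with per-hop monotonicity in $p_{\rm r}$. First I would establish that for every fixed $\mathcal{C}_x^o\in[0,1]$, $\mathcal{R}_{i,2}(p_{\rm r},\mathcal{C}_x^o)$ in \eqref{rate_second_hop_sim} is strictly increasing in $p_{\rm r}$ (each coefficient of $p_{\rm r}$ inside the log-argument is non-negative, with at least one strictly positive), and that $\mathcal{R}_{i,1}(p_{\rm r},\mathcal{C}_x^o)$ in \eqref{rate_first_hop_sim} is strictly decreasing in $p_{\rm r}$ (the numerator of the log-argument is linear in $p_{\rm r}$ whereas the denominator is at least linear, with quadratic dominance when $\mathcal{C}_x^o<1$, so a direct sign check on the derivative of the ratio closes the case). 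Together with Proposition~\ref{prop1}, which guarantees at most one crossing of the two per-hop rates on $\mathbb{R}_{++}$, this justifies the well-definedness of the three-piece form \eqref{R_tot_IGS_pr} and identifies $\{p_{\pi_1},p_{\pi_2}\}$ with the feasible intersection points ordered according to Definition~\ref{permutation_def}.

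Next I would collapse the outer two pieces to their endpoints. On $(0,p_{\pi_1}]$ the objective is $\tfrac{1}{2}(\mathcal{R}_{1,2}+\mathcal{R}_{2,2})$, a sum of strictly increasing functions, so its maximum on the feasible subset $(0,\min\{p_{\pi_1},p_{\max}\}]$ is attained at the right endpoint, which is either $p_{\pi_1}\in\mathcal{P}_{\rm int}$ or $p_{\max}$. Symmetrically, on $(p_{\pi_2},\infty)$ the objective is $\tfrac{1}{2}(\mathcal{R}_{1,1}+\mathcal{R}_{2,1})$, a sum of strictly decreasing functions, so on the feasible subset $(p_{\pi_2},p_{\max}]$ the maximum is attained at the left endpoint $p_{\pi_2}\in\mathcal{P}_{\rm int}$. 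Hence any optimizer drawn from these two pieces necessarily lies in $\mathcal{P}_{\rm int}\cup\{p_{\max}\}$.

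On the middle interval $(p_{\pi_1},p_{\pi_2}]$ the objective equals $\tfrac{1}{2}\mathcal{F}_{\pi_1,\pi_2}(p_{\rm r},\mathcal{C}_x^o)$, which is smooth in $p_{\rm r}$. Any interior maximizer must satisfy the first-order condition $\partial \mathcal{F}_{\pi_1,\pi_2}/\partial p_{\rm r}=0$; its real positive roots are exactly the roots of the quintic referenced just before the theorem statement, and the subset lying in $(p_{\pi_1},p_{\pi_2}]$ and respecting the power budget is precisely $\mathcal{P}_{\rm st}$ by Definition~\ref{def_feasible}. Boundary maximizers on this piece again reduce to $p_{\pi_1}$, $p_{\pi_2}$, or $p_{\max}$. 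Combining the three pieces via continuity of $\mathcal{R}_{\rm T}$ across the knots $p_{\pi_1},p_{\pi_2}$ shows that the global maximizer over $(0,p_{\max}]$ must belong to $\mathcal{P}_{\rm T}=\mathcal{P}_{\rm int}\cup\mathcal{P}_{\rm st}\cup p_{\max}$, and comparing the finite set of candidates yields $p_{\rm r}^*$ as stated.

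The main obstacle I anticipate is the first step: verifying strict monotonicity of $\mathcal{R}_{i,1}$ in $p_{\rm r}$ uniformly over $\mathcal{C}_x^o\in[0,1]$. The decreasing SINR contribution in \eqref{rate_first_hop} competes with the improper correction $\tfrac{1}{2}\log_2((1-\mathcal{C}_{y_i}^2)/(1-\mathcal{C}_{\mathcal{I}_i}^2))$, which is a positive function of $p_{\rm r}$; cleanly showing that the sum is still monotone (equivalently, that the consolidated log-argument in \eqref{rate_first_hop_sim} is decreasing in $p_{\rm r}$) is what pins down the regime separation produced by $p_{\pi_1}$ and $p_{\pi_2}$ and, hence, the validity of the entire piecewise reduction.
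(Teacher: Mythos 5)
Your proposal follows essentially the same route as the paper's proof: exploit the three-piece form \eqref{R_tot_IGS_pr}, use monotonicity of the outer pieces to collapse them to the endpoints $p_{\pi_1}$, $p_{\pi_2}$ (or $p_{\max}$), reduce the middle piece to the stationary points of $\mathcal{F}_{\pi_1,\pi_2}$, and intersect with the power budget. The only difference is that you spell out the per-hop monotonicity (that $\mathcal{R}_{i,2}$ is increasing and $\mathcal{R}_{i,1}$ is decreasing in $p_{\rm r}$, the latter following from a direct sign check on the derivative of the log-argument in \eqref{rate_first_hop_sim}, which indeed goes through uniformly over $\mathcal{C}_x^o\in[0,1]$), which the paper simply asserts as readily verifiable.
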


\begin{proof}
 From the definition of the total rate function in \eqref{R_tot_IGS_pr}, it can be readily verified that the function in the first interval, i.e., ${0 < {p_{\rm{r}}} \le p_{\pi_1} }$, is monotonically increasing in $p_{\rm{r}}$, thus the optimal solution of $\bf{P2}$ in this interval is $p_{\pi_1}$.  Moreover, the function in \eqref{R_tot_IGS_pr} in the last interval, i.e., ${p_{\pi_2}  < {p_{\rm{r}}} < \infty }$, is monotonically decreasing in $p_{\rm{r}}$ and hence the optimal solution in this interval is $p_{\pi_2}$. If the maximum of ${R_{{\rm{T}}}}$ is in the middle interval, it must occur at  a stationary point. Finally, we limit these points by the power budget and this concludes the proof.  
\end{proof}
\vspace{2pt}

\noindent \textit{Sub-problem 2) }\textit{ Circularity Coefficient Optimization Problem}
Now, we optimize the impropriety of the relays transmit signal, measured by the circularity coefficient, assuming a fixed transmit power ${p_{\rm{r}}^o}$. To this end, we formulate the following optimization problem.
\begin{align}
{\bf{P3}}\left(p_{\rm{r}}^o\right):&\mathop {\max }\limits_{\mathcal{C}_x}\qquad  {\mathcal{R}_{{\rm{T}}}}\left( {{p_{\rm{r}}^o},\mathcal{C}_x} \right)\nonumber \\
&\; {\rm{s}}{\rm{.t}}{\rm{.}} \quad \quad \;\; 0\leq \mathcal{C}_x \leq 1.
\end{align}
This problem has been addressed in our work \cite{gaafar2016letter} and the optimal solution is given in the following theorem.
\begin{theorem}\cite{gaafar2016letter}
In a two-path relaying system, where the two relays transmit improper signals and by treating interference as a Gaussian noise, the optimal circularity coefficient, at a fixed relay transmit power, that maximizes the total achievable rate can be obtained as

\vspace{5pt}
\noindent {\bf Case 1:} no intersection points 
\scriptsize
\begin{align}\mathcal{C}_x^*=
\begin{cases}
{0},&{\rm{if}}\quad k_1\left( {\mathcal{C}}\right)=k_2\left( {\mathcal{C}}\right)=2,\; 0 \leq \mathcal{C} \leq 1\\
{1},&{\rm{if}}\quad k_1\left( {\mathcal{C}}\right)=k_2\left( {\mathcal{C}}\right)=1,\; 0 \leq \mathcal{C} \leq 1\\
{\mathop {\arg \max }\limits_{\mathcal{C}_x \in {\{{0,\mathcal{C}}_{{\rm{st}}_i},1\} }} \mathcal{F}_{i,j}\left({p_{\rm{r}}^o} ,{\mathcal{C}_x}\right)},&{\rm{if}}   \quad k_1\left( {\mathcal{C}}\right)=i, k_2\left( {\mathcal{C}}\right)=j,0 \leq \mathcal{C} \leq 1
\end{cases}.
\end{align}
\normalsize
\noindent {\bf Case 2:} one intersection point, ${{\cal C}_{{i}}}$  
\scriptsize
\begin{align}\mathcal{C}_x^*=
\begin{cases}
{\mathop {\arg \max }\limits_{\mathcal{C}_x \in {\{{{\cal C}_{{i}}},{\mathcal{C}}_{{\rm{st}}_j},1\}}} \mathcal{F}_{j,i}\left({p_{\rm{r}}^o}, {\mathcal{C}_x}\right)},&{\rm{if}}\quad {k_j\left( {\mathcal{C}}\right)=1,\;0 \leq \mathcal{C} \leq 1}\\
{\mathop {\arg \max }\limits_{\mathcal{C}_x \in {\{0,{\mathcal{C}}_{{\rm{st}}_i},{{\cal C}_{{i}}}\}}} \mathcal{F}_{i,j}\left({p_{\rm{r}}^o}, {\mathcal{C}_x}\right)},&{\rm{if}}\quad {k_j\left( {\mathcal{C}}\right)=2,\;0 \leq \mathcal{C} \leq 1}
\end{cases}.
\end{align} 
\normalsize
\noindent {\bf Case 3:} two intersection points, ${\left(\mathcal{C}_{\pi_1},\mathcal{C}_{\pi_2} \right)}$
\scriptsize  
\begin{align}
\mathcal{C}_x^*=\mathop {\arg \max }\limits_{\mathcal{C}_x \in {\{\mathcal{C}_{\pi_1},{\mathcal{C}}_{\rm{st}_{\pi_2}},\mathcal{C}_{\pi_2} \}}} \mathcal{F}_{\pi_2,\pi_1}\left({p_{\rm{r}}^o}, {\mathcal{C}_x}\right).
\end{align}
\normalsize
where $\mathcal{C}_i$ and ${\mathcal{C}}_{{\rm{st}}_i}$ are the intersection between ${\mathcal{R}_{i,1}}\left( {{p_{\rm{r}}^o},\mathcal{C}_x}\right)$  and ${\mathcal{R}_{i,2}}\left( {{p_{\rm{r}}^o},\mathcal{C}_x}\right)$ and the stationary point for $\mathcal{F}_{i,j}\left({p_{\rm{r}}^o},\mathcal{C}_x\right)$ with respect to $\mathcal{C}_x$, over the feasible interval $0 < \mathcal{C}_x \leq 1$, respectively\footnote{For more about the existance and uniqueness of $\mathcal{C}_i$ and ${\mathcal{C}}_{{\rm{st}}_i}$, please refer to \cite{gaafar2016letter}. }.
\end{theorem}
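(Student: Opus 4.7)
The plan is to exploit the opposing monotonicity of the two per-hop rates in $\mathcal{C}_x$ and then partition $[0,1]$ into sub-intervals on which the bottleneck hop of each path stays constant, so that $\mathcal{R}_{\mathrm{T}}(p_{\mathrm{r}}^o,\cdot)$ is smooth and easy to optimize piecewise. First I would verify from \eqref{rate_first_hop_sim} that $\mathcal{R}_{i,1}(p_{\mathrm{r}}^o,\mathcal{C}_x)$ is strictly increasing in $\mathcal{C}_x$ on $[0,1]$ (the factor $1-\mathcal{C}_x^2$ appears only in the SINR denominator) and from \eqref{rate_second_hop_sim} that $\mathcal{R}_{i,2}(p_{\mathrm{r}}^o,\mathcal{C}_x)$ is strictly decreasing (the same factor multiplies a positive SNR summand). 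Consequently $\mathcal{R}_{i,1}-\mathcal{R}_{i,2}$ is strictly increasing, so $\mathcal{R}_{i,1}=\mathcal{R}_{i,2}$ has at most one root $\mathcal{C}_i\in[0,1]$ for each path and at most two across the two paths, which explains the three-case partition of the statement.

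Next I would rewrite $\mathcal{R}_{\mathrm{T}}$ using $k_i$: on any sub-interval where $(k_1,k_2)$ is constant, $\mathcal{R}_{\mathrm{T}}$ reduces to one of three forms, namely a sum of two decreasing terms (if $k_1=k_2=2$, optimum at the left endpoint), a sum of two increasing terms (if $k_1=k_2=1$, optimum at the right endpoint), or $\tfrac{1}{2}\mathcal{F}_{i,j}$ (if $k_1=i\neq j=k_2$, with candidates being the two endpoints and any interior stationary point $\mathcal{C}_{\mathrm{st}_i}$ of $\mathcal{F}_{i,j}$, whose existence and uniqueness I would quote from \cite{gaafar2016letter}).

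The case bookkeeping then runs as follows. Case 1 corresponds to $\mathcal{C}_1,\mathcal{C}_2\notin(0,1)$, so $(k_1,k_2)$ is constant on the entire unit interval and the three sub-cases of the statement map directly to the three forms above. Case 2 has a single interior intersection $\mathcal{C}_i$ that partitions $[0,1]$ into two pieces; on one piece $k_i=1$ and on the other $k_i=2$, while $k_j$ is fixed throughout, so one piece is monotone (its best point being $\mathcal{C}_i$, already in the candidate list) and the other is of $\mathcal{F}$-form, producing the two candidate sets of the statement indexed by the value of $k_j$. Case 3 has two interior intersections $\mathcal{C}_{\pi_1}<\mathcal{C}_{\pi_2}$; the opposing monotonicities force the three resulting pieces to be ``both-$1$'', ``mixed'' with $k_{\pi_1}=2,k_{\pi_2}=1$ giving $\tfrac{1}{2}\mathcal{F}_{\pi_2,\pi_1}$, and ``both-$2$'' in that order, and since the monotone outer pieces are dominated by $\mathcal{C}_{\pi_1}$ and $\mathcal{C}_{\pi_2}$, only $\{\mathcal{C}_{\pi_1},\mathcal{C}_{\mathrm{st}_{\pi_2}},\mathcal{C}_{\pi_2}\}$ survives as candidates.

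The main obstacle is the orientation bookkeeping inside Cases 2 and 3: one has to pin down on each sub-interval which of the two hops is the min for each path, and this is determined by the sign of $\mathcal{R}_{i,1}-\mathcal{R}_{i,2}$ at a chosen boundary together with strict monotonicity, rather than by $\mathcal{C}_x$ in isolation. Once the orientations are fixed, the candidate lists in the statement follow mechanically.
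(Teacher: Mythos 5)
Your proposal is correct and follows essentially the same route as the paper's proof: establish that $\mathcal{R}_{i,1}$ is increasing and $\mathcal{R}_{i,2}$ is decreasing in $\mathcal{C}_x$, partition $[0,1]$ by the (at most two) intersection points so that the bottleneck assignment $(k_1,k_2)$ is constant on each piece, and then resolve each piece as monotone (endpoint optimum) or mixed (endpoints plus the stationary point of $\mathcal{F}_{i,j}$). Your added justification that at most one intersection exists per path, via the strict monotonicity of $\mathcal{R}_{i,1}-\mathcal{R}_{i,2}$, is a detail the paper defers to its reference but is consistent with it.
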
 
\begin{proof}
An extended version of the  proof in \cite{gaafar2016letter} is provided in the appendix.
\end{proof}

\vspace{2pt}

\noindent \textit{Coordinate Descent: }\textit{Joint Optimization Problem }

Here, we aim at optimizing jointly the relays power and circularity coefficient in order to maximize the total rate of the two-path relaying system via CD, in which we implement alternate optimization of $p_{\rm{r}}$ and $\mathcal{C}_x$. In this method, we optimize the transmit power for a fixed circularity coefficient. Then, we use the optimal power in the previous step to optimize for the circularity coefficient and iterate between the optimal solutions till a stopping criterion is satisfied. For this purpose, we develop Algorithm I to obtain the optimization parameters of $\bf{P1}$. 

\floatname{algorithm}{}
\begin{algorithm} \label{alg1}
\renewcommand{\thealgorithm}{}
\newcommand{\tab}[1]{\hspace{.06\textwidth}\rlap{#1}}
\caption{\textbf{ Algorithm I: Joint Alternate Optimization of the power and circularity coefficient based on the CD method. }}
\begin{algorithmic}[1]
\State \textbf{Input} $h_i$, $g_i$, $f$, ${\sigma _n^2}$, $p_{\rm{max}}$, $\epsilon_{\rm{max}}$, $0 < {p_{\rm{r}}^o} \leq p_{\rm{max}} $, \qquad \qquad$0 \leq \mathcal{C}_x^o \leq 1$.\vspace{3pt}
\State \textbf{Initialize} ${p_{\rm{r}}}\leftarrow{p_{\rm{r}}^o}$, $\mathcal{C}_x\leftarrow\mathcal{C}_x^o$ and $\epsilon \leftarrow \infty$ \vspace{3pt}
\While{$\epsilon > \epsilon_{\rm{max}}$ }\vspace{7pt}
\State \textbf{Compute} $\hat p_{\textrm{{r}}} $ from  $\textbf{P2} \left( \mathcal{C}_x\right)$ using Theorem 1\vspace{5pt}
\State \textbf{Compute} $ \hat{ \mathcal{C}}_x $ from  $\textbf{P3} \left(\hat p_{\textrm{{r}}} \right)$ using Theorem 2 \vspace{3pt}
\State \textbf{Set} $\epsilon=\max\left\{\mid\hat{\mathcal{C}}_x-\mathcal{C}_x\mid,\mid \hat p_{\textrm{r}}-p_{\textrm{r}}\mid\right\}$ $\%max\; error$\vspace{5pt}
\State \textbf{Update} ${p_{\rm{r}}}\leftarrow{\hat p_{\rm{r}}}$ \vspace{3pt}
\State \textbf{Update} $\mathcal{C}_x\leftarrow\hat{\mathcal{C}}_x$ \vspace{3pt}
\EndWhile

%\State \textbf{Compute} ${{\cal I}_{{\rm{max,}}{{\rm{p}}_i}}}$ from \eqref{Imax}.
%\If {$\left({\cal I}_{{\rm{max,}}{\rm{p}_i}} > {p_{_j}}\upsilon _{{\rm{p}}_j}\right)$}\vspace{0.1cm}
%\For {$z=1:k+1$}
%\State \textbf{Compute} $m = \mathop {\arg \min }\limits_{l \in \{0,1,2\}}   p_{\mathrm{s}}^{( l  )} \left(  \frac{{\cal C}_ x^{(z-1)} +{\cal C}_ x^{(z)}}{2} \right)$ 
%\If{ $m =0$}\vspace{0.1cm}
%\State $p_{\rm{o}}^{(z)}\leftarrow {p_{{\rm{s,max}}}}$,   $ \quad \mathcal{C}_\mathrm{o}^{(z)}\leftarrow {\cal C}_ x ^{(z-1)}$\vspace{0.1cm}
%\ElsIf{ $\left(\frac{{{\beta _j}{\gamma _{\rm{s}}}{\Psi _m}\left( {1,2} \right)}}{{{{\cal I}_{{{\rm{s}}_j}}}\left( {\sum\nolimits_{i = 1}^2 {{p_i}{{\cal I}_{{{\rm{p}}_i}}}}  + 1} \right) }} >  - 1\right)$, $j\neq m$}\vspace{0.1cm}
%\State $p_{\rm{o}}^{(z)}\leftarrow p_\mathrm{s}^{\left( m \right)}\left({\cal C}_x ^{(z)}\right)$, $\quad \mathcal{C}_{\rm{o}}^{(z)}\leftarrow {\cal C}_x ^{(z)}$
%\Else {}
%\State $p_{\rm{o}}^{(z)}\leftarrow p_{\mathrm{s}}^{\left( m \right)}\left({\cal C}_x ^{(z-1)}\right)$, $\quad\mathcal{C}_{\rm{o}}^{(z)}\leftarrow {\cal C}_x ^{(z-1)}$
%\EndIf 
%\EndFor
%\State \textbf{Output} $\left( {p_{\rm{s}}^*,{\cal C}_x^*} \right) = \mathop {\arg \max }\limits_{p_{\rm{o}}^{(z)},\;{\cal C}_o^{(z)}} {R_{{\rm{s}}}}\left( {p_{\rm{o}}^{(z)},{\cal C}_{\rm{o}}^{(z)}} \right)$
%\Else{}
%\State \textbf{Output} $\left( {p_{\rm{s}}^*,{\cal C}_x^*} \right)=\left(0,0\right)$
%\EndIf
%
\end{algorithmic}
\end{algorithm}

\section{Numerical Results}\label{sec:results}
In this section, we numerically evaluate the average end-to-end rate of the proposed two-path relaying system using improper signaling. Throughout the following simulation scenarios, we compare between proper and improper signaling. 
For proper based system system, we include two scenarios: maximum power allocation (MPA) and optimal power allocation (OPA). On the other hand for improper based system, we include three scenarios: MPA for maximally improper relay signal, i.e., $\mathcal{C}_x=1$, optimized CD based method using an initial point for the power as  $p^o_{\textrm{r}}=p_{\textrm{max}}$ and  two different initial starting points for the circularity coefficient; $\mathcal{C}_x^{0} = 0$  and  $\mathcal{C}_x^{0} = 1$ and the joint optimal allocation of $p_{\mathrm{r}}$ and $\mathcal{C}_x$ using a fine exhaustive grid search (GS) as a benchmark for the alternate optimization. The average channel signal-to-noise ratios (SNRs) are defined as $\gamma_{h_i} = {\sigma^2_{h_i}}/{\sigma^2_n}$, $\gamma_{g_i} = {\sigma^2_{g_i}}/{\sigma^2_n}$ and $\gamma_{f} = {\sigma^2_{f}}/{\sigma^2_n}$ . The results are averaged over $10000$ channel realizations and $\epsilon_{\rm{max}}=0.0001$.

 As for the simulation setup, we assume symmetric relays links with zero-mean complex Gaussian distribution and $\gamma_{h_i} = 10 \; \mathrm{dB}$, $\gamma_{g_i} = 15 \; \mathrm{dB}$, $\gamma_{f}=20 \; \mathrm{dB}$, unless otherwise specified.

Firstly, to explore the impact of improper signaling on two-path relaying systems,  we study the average rate performance  versus $\gamma_{f}$ as can be seen in Fig \ref{fig_R1R2}. It is clear that, proper and improper based  systems suffer from a rate degradation as the interference link increases  which worsen the performance of $\mathrm{S}-\mathrm{R}$ links and thus limits the end-to-end  rate. For the proper based system, we observe that optimizing the relay power reduces the IRI impact on the relays and improve the  rate. As for improper signaling, optimizing $\mathcal{C}_x$ with maximum power can significantly boost the rate at mid and high interference levels. At low interference levels, improper-MPA achieves better performance than proper-MPA, however  it can not compete with proper-OPA as the interference is not dominant in such situation and thus proper signaling becomes preferable. The same observation is observed for other improper based systems when compared with proper-MPA. 

As for CD joint optimization solution, the proper choice of initial points in CD plays an important role in the overall performance compared with the GS solution as can be observed in Fig. \ref{fig_R1R2}. As a result, staring the CD with $\mathcal{C}_x^{0} =1$ can converge to the GS solution while  $\mathcal{C}_x^0 =0$ improves the rate performance but it does not converge  to the optimal performance. This observation can be justified as the solution at high interference levels reduces to maximally improper, i.e., $\mathcal{C}_x^*=1$ as can be seen from the improper-MPA system.

\begin{figure}[!t]
\centering
\includegraphics[width=3.5in]{./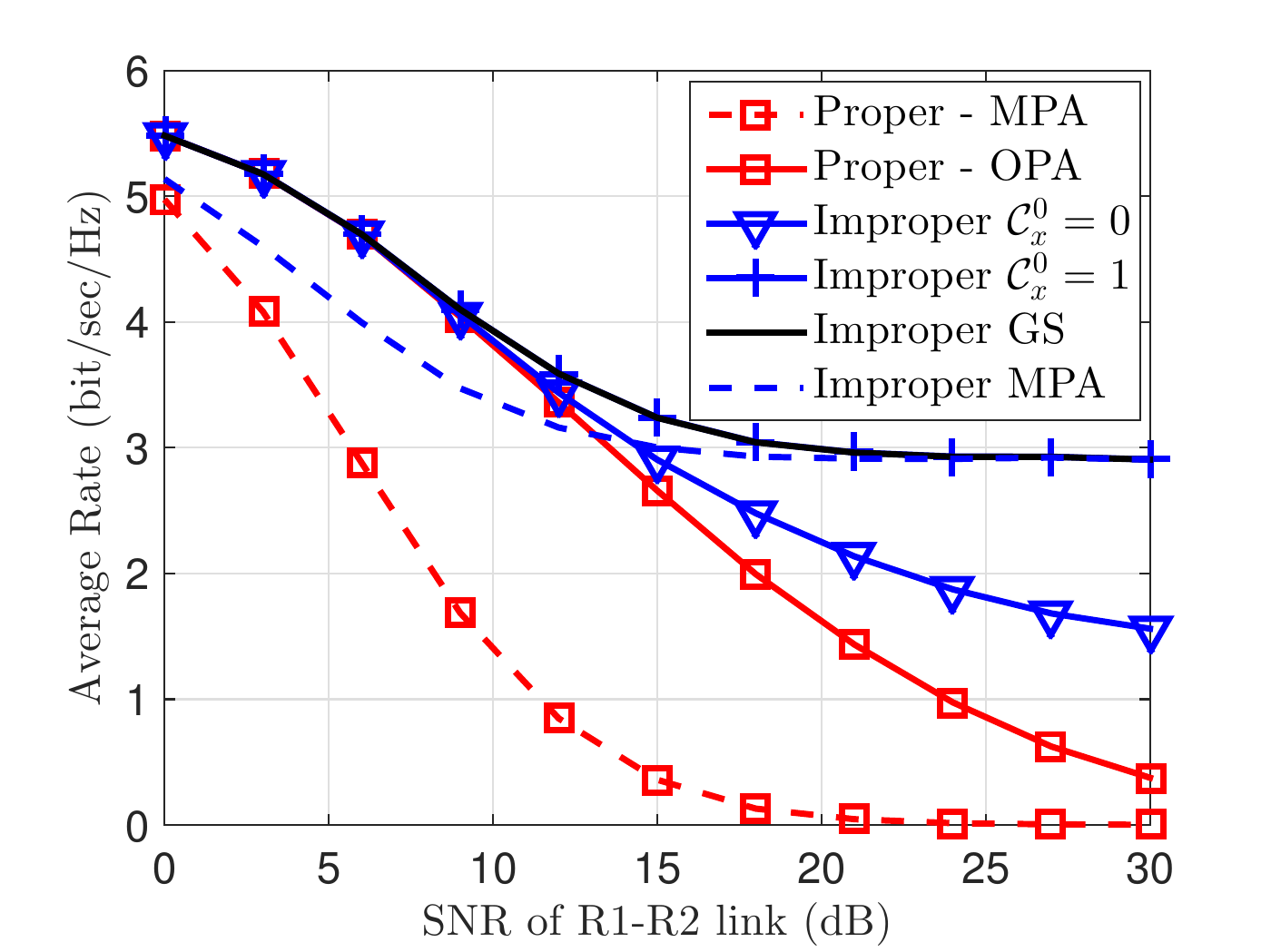}
\caption{The average achievable end-to-end rate for proper and improper signaling with different methods versus $\gamma_{f}$.}
\label{fig_R1R2}
\end{figure}

Secondly, we study the average end-to-end rate performance of the aforementioned system versus $\gamma_{h_i}$ as can be shown in Fig. \ref{fig_SR}. At very low $\gamma_{h_i}$ values, the first hops become a bottleneck and degrade the end-to-end average rate for both proper and improper based systems. As $\gamma_{h_i}$ increases, improper systems use more transmit powers and alleviate the IRI through the increase of the signal impropriety by boosting the circularity coefficient while proper-MPA systems use relatively less power. This improvement gap remains until the value of $\gamma_{h_i}$ becomes relatively large with respect to   $\gamma_{f}$, and hence the proper based system starts to enhance its performance by increasing its transmit power. At high $\gamma_{h_i}$, both systems tend to utilize the power budget and the improper solution reduces to  proper. From this investigation, we can state that improper signaling is preferred when the first hops become a bottleneck. As expected from the previous simulation scenario at $\gamma_{f} = 20 \; \mathrm{dB}$, improper-MPA achieves a close performance to the improper-GS. 

\begin{figure}[!t]
\centering
\includegraphics[width=3.5in]{./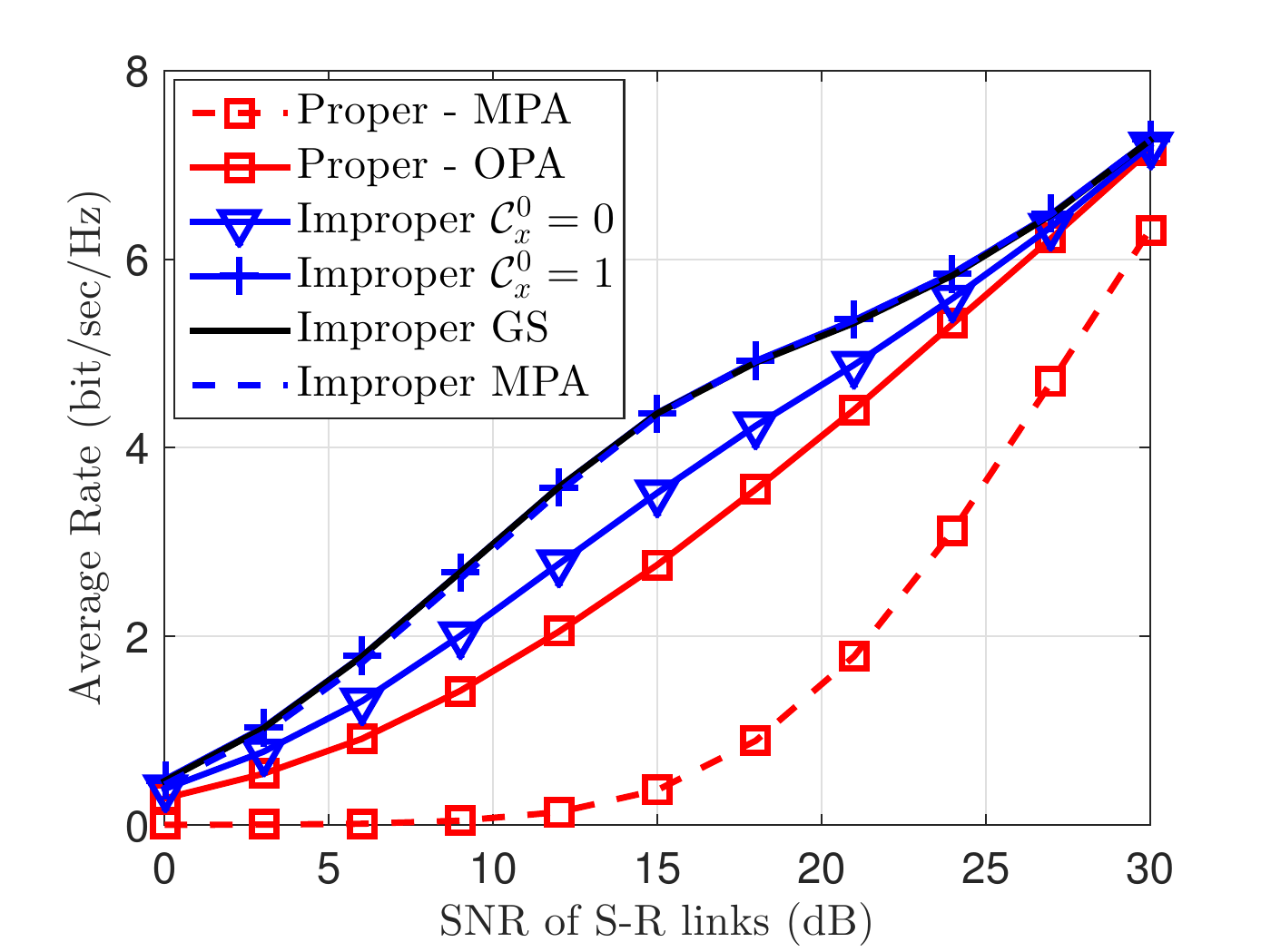}
\caption{The average achievable end-to-end rate for proper and improper signaling with different techniques versus $\gamma_{h_i}$.}
\label{fig_SR}
\end{figure}

%Secondly, we investigate the impact of the inter-relay interference channel on the system performance as shown in Fig. 3. For weak inter-relay interference channels, the interference become negligible and PGS tend to be the optimal solution. As the interference increases, all systems suffer from rates degradation but with different ways. The IGS based system makes use of the signal asymmetry property to relieve the interference impact on the relays. The benefit of the IGS can be observed to start from  $\pi_{f}>  \pi_{h}$. As for the CD solution, it achieves close to optimal performance as can be observed in Fig. 3.

%Finally, we explore the relay(s) location impact on the achievable rate. Both relays are assumed to take the same distance from each node. Fig. 4 plots the achievable rate versus the relative relays location to the source. For the PGS system, the systems suffer from rate degradation as the relays are getting far from the source because the main system data are transferred from For near to the source relays location, the links between the source and the relays are strong, while the links between  For the PGS scenario,  

%%%%%%%%%%%%%%%%%%%%%%%%%%%%%%%%%%%%%%%%%%%%%%%
%%%%%%%%%%%%%%%%%%         Conclusion       %%%%%%%%%%%%%%%%%%%
%%%%%%%%%%%%%%%%%%%%%%%%%%%%%%%%%%%%%%%%%%%%%%%
\section{Conclusion}\label{sec:conc}
In this paper, we propose to use improper signaling  in order to mitigate the inter-relay interference (IRI) in two-path relaying systems. First, we formulate an optimization problem to tune the relays transmit power and the circularity coefficient, a measure of the degree of asymmetry of the signal, to maximize the total end-to-end achievable rate of the two-path relaying system considering a power budget. We first introduce the optimal allocation of the relays power at a fixed circularity coefficient to maximize the achievable rate, then we optimize the circularity coefficient at a fixed relays power. After that we numerically optimize  the relays power and circularity coefficient jointly through a coordinate descent based method. The numerical results show a significant improvement of the total rate when the relays transmit improper signals, specifically, at mid and high IRI values. More generally, the merits of using improper signaling become significant when the first hop is the bottleneck of the system due to either week gains or the excess of IRI. 
%
%Future research lines include; considering the general system with different transmit powers and impropriety degrees at both relays along with asymmetric time allocation for the two transmission phases.    

%%%%%%%%%%%%%%%%%%%%%%%%%%%%%%%%%%%%%%%%%%%%%%%
%%%%%%%%%%%%%%%%%         Appendices      %%%%%%%%%%%%%%%%%%%
%%%%%%%%%%%%%%%%%%%%%%%%%%%%%%%%%%%%%%%%%%%%%%%
%
%%%%%%%%%%%%%%%%%%% Appendix A %%%%%%%%%%%%%%%%%%
\appendices
\section*{Appendix}
\section*{Proof of Theorem 2}
In fact, this theorem has been proved in \cite{gaafar2016letter}, however, here we give additionally graphs of the possible configurations of the rate functions ${\mathcal{R}_{i,j}}\left( {{p_{\rm{r}}^o},{{\cal C}_x}} \right)$ in \eqref{rate_first_hop_sim} and \eqref{rate_second_hop_sim}. These graphs makes the optimization problem more visually clear for the convenience of the reader.
\begin{proof} For the first case in Fig. \ref{fig:no_int}, we have four different orientations for the minimum pair of rate functions for the two paths. The minimum pair is the two decreasing functions ${\mathcal{R}_{i,2}}\left( {p_{\rm{r}}^o},{\mathcal{C}_x}\right),\forall i$ and hence, their sum will also be decreasing and the optimal solution is $\mathcal{C}_x^*=0$. Similar argument applies if the minimum pair is the two increasing functions yielding $\mathcal{C}_x^*=1$. If the minimum pair is of opposite monotonicity, we need to compute the stationary point of their sum because if there is a maximum on $0 < \mathcal{C}_x<1$, it must occur at the stationary point calculated from \cite[Proposition 3]{gaafar2016letter}.\\
  
In the second case in Fig. \ref{fig:one_int}, the intersection point, ${{\cal C}_{{i}}}$, of the two hops rates of the $i$th path, divides the $\mathcal{C}_x$ range into two intervals. In the first interval $0 < \mathcal{C}_x\leq{{\cal C}_{{i}}}$, the minimum rate of the $i$th path is ${\mathcal{R}_{i,1}}\left({p_{\rm{r}}^o}, {\mathcal{C}_x}\right)$, and in the second interval ${{\cal C}_{{i}}} < \mathcal{C}_x\leq1$, the minimum rate of the $i$th path is ${\mathcal{R}_{i,2}}\left( {p_{\rm{r}}^o},{\mathcal{C}_x}\right)$. For the $j$th path, we have two different orientations on $0 < \mathcal{C}_x<1$, either the minimum is the first or the second hop. Hence, by a similar argument as in Case 1, the result follows directly.\\

\vspace{2pt}
Finally, in the third case in Fig. \ref{fig:two_int}, we can write the total achievable rate as 
\begin{align}\label{R_tot_IGS_Cx}
&{\mathcal{R}_{{\rm{T}}}}\left({p_{\rm{r}}^o}, {\mathcal{C}_x} \right) = \frac{1}{2} \times \\ \nonumber 
& \left\{ {\begin{array}{*{20}{c}}
{\sum\limits_{i = 1}^2 {\mathcal{R}_{i,1}}\left({p_{\rm{r}}^o}, {\mathcal{C}_x}\right), }&{\rm{if}}&{0 < \mathcal{C}_x \leq \mathcal{C}_{\pi_1}}\\
{R_{\pi_2,1}}\left( {p_{\rm{r}}^o},{\mathcal{C}_x}\right)+{R_{\pi_1,2}}\left({p_{\rm{r}}^o}, {\mathcal{C}_x}\right),&{\rm{if}}&{\mathcal{C}_{\pi_1} < \mathcal{C}_x \leq \mathcal{C}_{\pi_2} }\\
{\sum\limits_{i = 1}^2 {\mathcal{R}_{i,2}}\left({p_{\rm{r}}^o}, {\mathcal{C}_x}\right), }&{\rm{if}}&{\mathcal{C}_{\pi_2} < \mathcal{C}_x < 1 }
\end{array}} \right.\hspace{-5pt}.
\end{align}
From the definition of the total rate function in \eqref{R_tot_IGS_Cx}, it can be readily verified that the function in the first interval, i.e., ${0 < \mathcal{C}_x \leq \mathcal{C}_{\pi_1}}$, is monotonically increasing in $\mathcal{C}_x$, thus the optimal solution of in this interval is $\mathcal{C}_{\pi_1}$.  Moreover, the function in \eqref{R_tot_IGS_Cx} in the last interval, i.e., ${\mathcal{C}_{\pi_2} < \mathcal{C}_x < 1 }$, is monotonically decreasing in $\mathcal{C}_x$ and hence the optimal solution in this interval is $\mathcal{C}_{\pi_2}$. If the maximum of ${\mathcal{R}_{{\rm{T}}}}\left({p_{\rm{r}}^o}, {\mathcal{C}_x} \right) $, with respect to $\mathcal{C}_x$, is in the middle interval, it must occur at  a stationary point and this concludes the proof. 
\end{proof}

\begin{figure*}
    \centering
    \begin{subfigure}[b]{0.45\textwidth}
        \includegraphics[width=\textwidth]{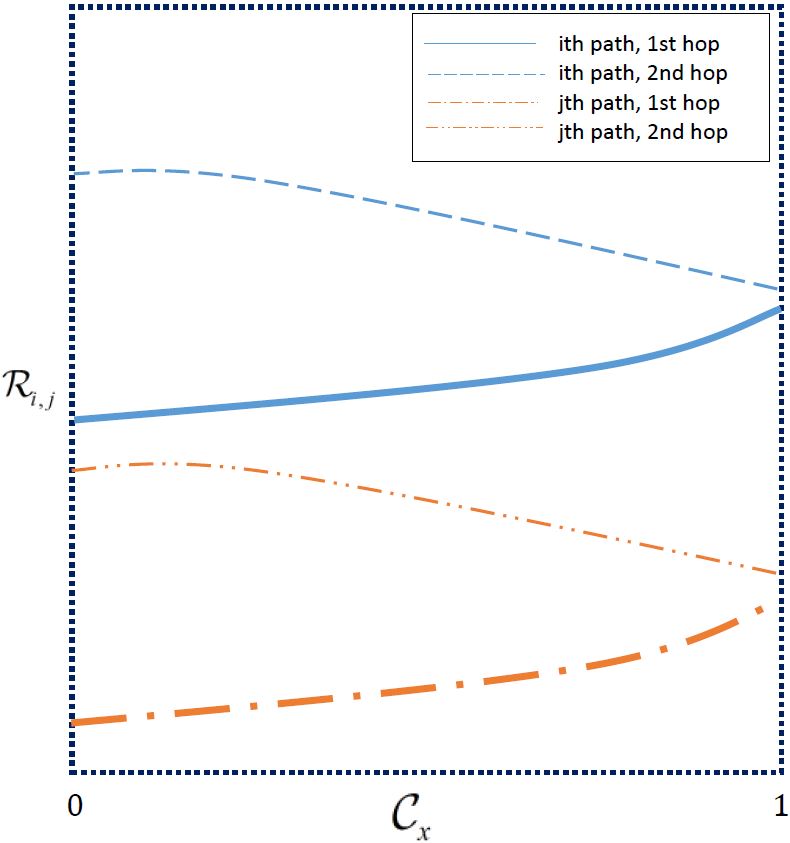}
        \caption{The minimum rate functions are both increasing. }
    \end{subfigure}
    ~ %add desired spacing between images, e. g. ~, \quad, \qquad, \hfill etc. 
      %(or a blank line to force the subfigure onto a new line)
      \hfill
    \begin{subfigure}[b]{0.45\textwidth}
        \includegraphics[width=\textwidth]{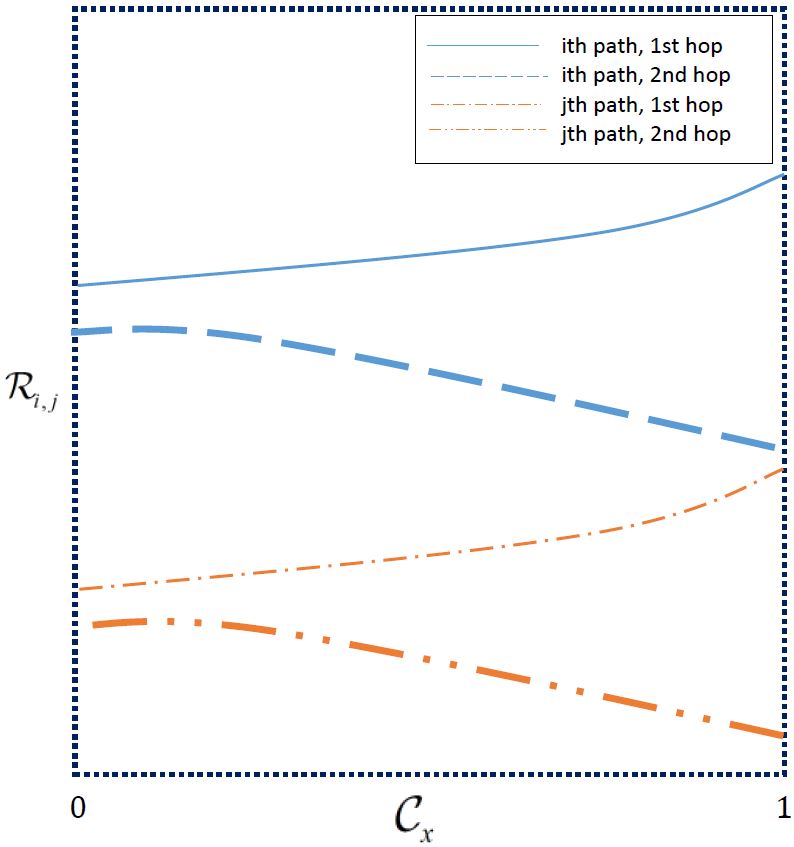}
        \caption{The minimum rate functions are both decreasing.}
    \end{subfigure}
    
    \begin{subfigure}[b]{0.45\textwidth}
        \includegraphics[width=\textwidth]{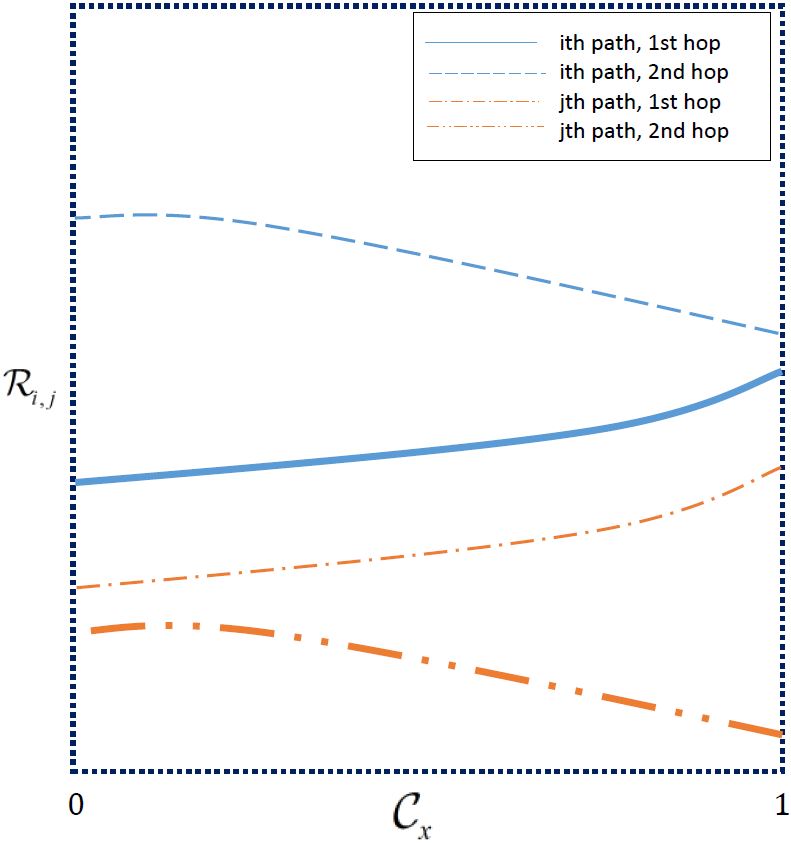}
        \caption{The minimum rate functions are increasing and decreasing.}
    \end{subfigure}
    \hfill
    \begin{subfigure}[b]{0.45\textwidth}
        \includegraphics[width=\textwidth]{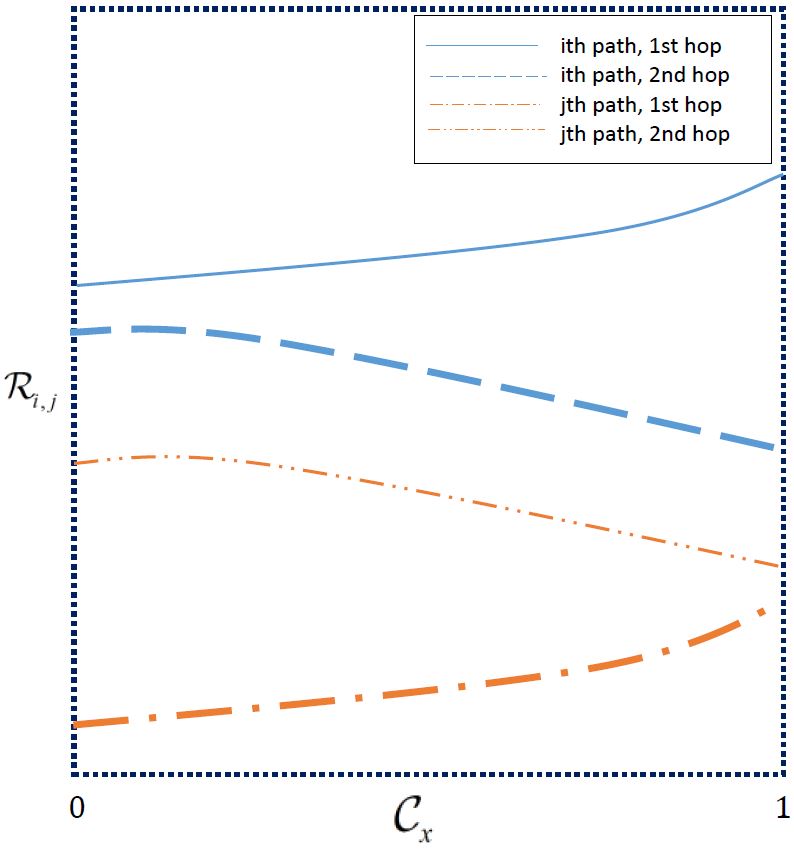}
        \caption{The minimum rate functions are decreasing and increasing.}
    \end{subfigure}
  
    \caption{Possibilities for the rate functions configurations in case of no intersections between the 1st and 2nd hops of both paths (solid lines for the minumum rate function). }\label{fig:no_int}
\end{figure*}

%===============================================================================================================================

\begin{figure*}
    \centering
    \begin{subfigure}[b]{0.45\textwidth}
        \includegraphics[width=\textwidth]{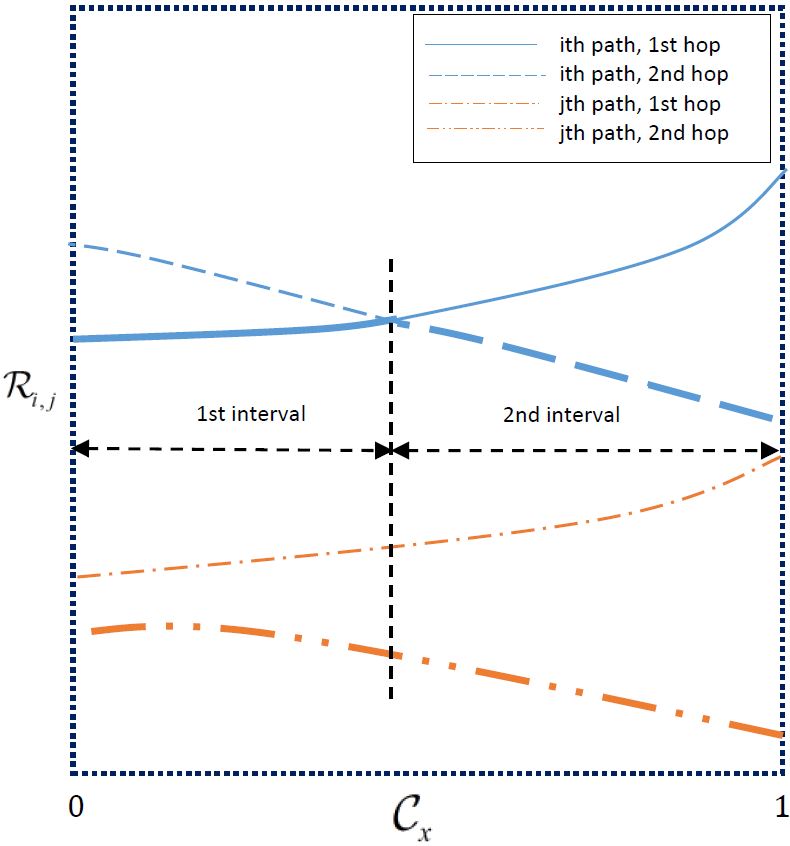}
        \caption{The minimum rate function of the $j$th path is decreasing. }
    \end{subfigure}
    ~ %add desired spacing between images, e. g. ~, \quad, \qquad, \hfill etc. 
      %(or a blank line to force the subfigure onto a new line)
      \hfill
    \begin{subfigure}[b]{0.45\textwidth}
        \includegraphics[width=\textwidth]{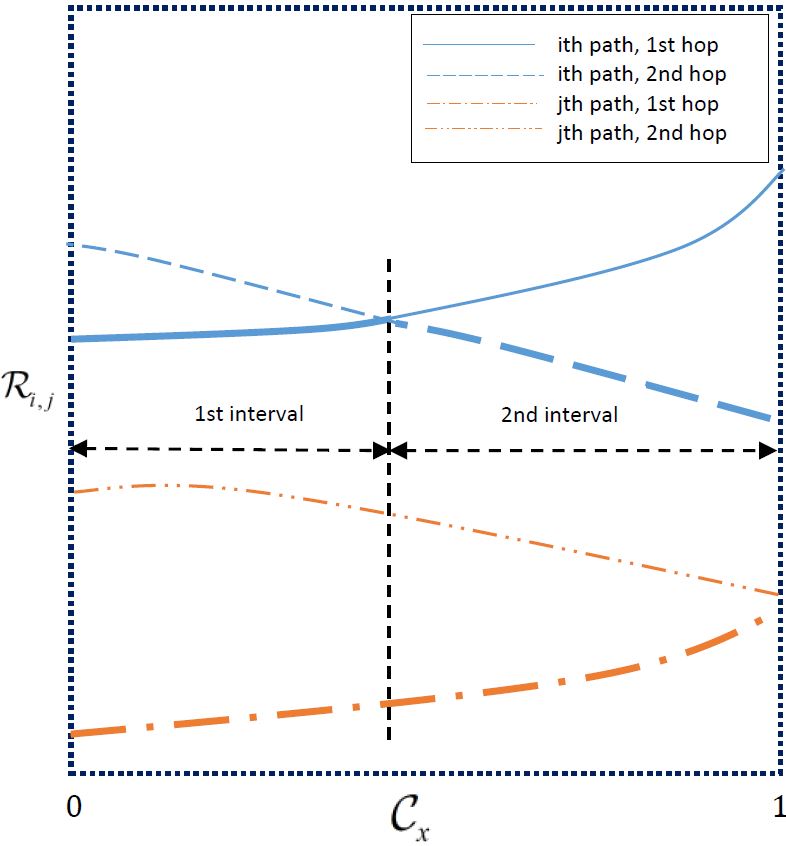}
        \caption{The minimum rate function of the $j$th path is increasing.}
    \end{subfigure}

    \caption{Possibilities for the rate functions configurations in case of existence of intersection between the 1st and 2nd hops of only one of the paths (solid lines for the minumum rate function).}\label{fig:one_int}
\end{figure*}

%===============================================================================================================================
\begin{figure*}
    \centering
        \includegraphics[width=0.45\textwidth]{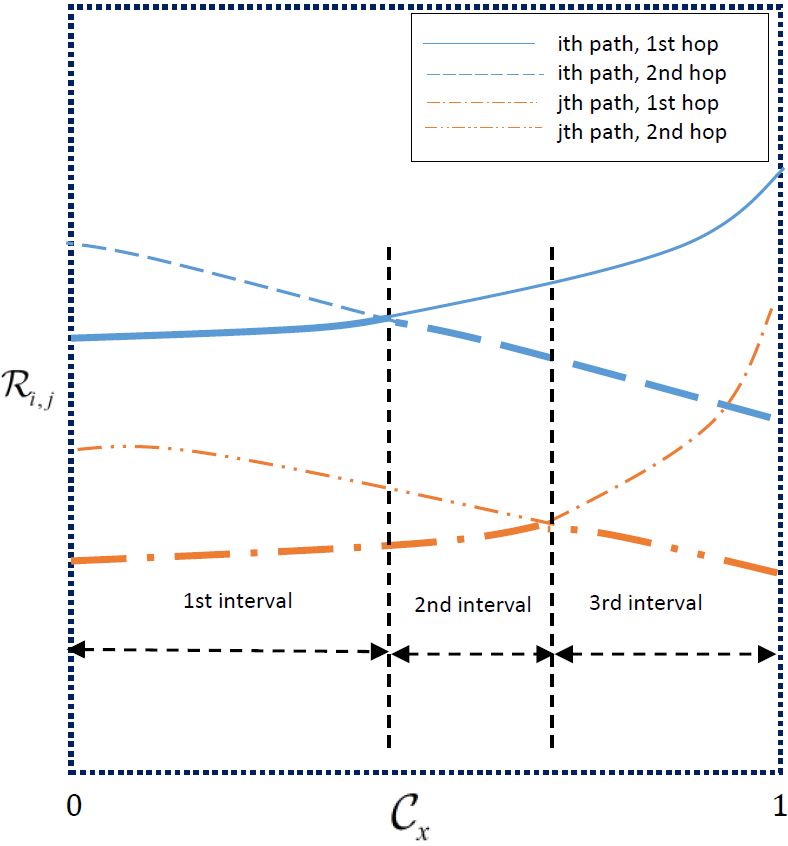}
        \caption{Possibilities for the rate functions configurations in case of existence of intersection between the 1st and 2nd hops of both paths (solid lines for the minumum rate function).}
        \label{fig:two_int}
    \end{figure*}
\clearpage
    
    \bibliographystyle{IEEEtran}

\bibliography{IEEEabrv,gaafar_ref_ICCW'17}
   
\end{document}